% This is samplepaper.tex, a sample chapter demonstrating the
% LLNCS macro package for Springer Computer Science proceedings;
% Version 2.20 of 2017/10/04
%
\documentclass[runningheads]{llncs}
%
% Used for displaying a sample figure. If possible, figure files should
% be included in EPS format.
%
% If you use the hyperref package, please uncomment the following line
% to display URLs in blue roman font according to Springer's eBook style:
% \renewcommand\UrlFont{\color{blue}\rmfamily}

% Enable numberng for subsubsection (Added by Omid)
%\makeatletter
%\renewcommand\subsubsection{\@startsection{subsubsection}{3}{\z@}%
%	{-18\p@ \@plus -4\p@ \@minus -4\p@}%
%	{0.5em \@plus 0.22em \@minus 0.1em}%
%	{\normalfont\normalsize\bfseries\boldmath}}
%\makeatother
%\setcounter{secnumdepth}{3}

\usepackage{amsmath,amssymb,amsfonts, mathtools}
\usepackage{algorithm, algpseudocode}
\usepackage{graphicx}
\usepackage{textcomp}
\usepackage{xcolor}
\usepackage{booktabs} % For formal tables
\usepackage{hyperref}
\usepackage{amsmath}
\usepackage{enumitem}
\usepackage{relsize}
\usepackage{subcaption}
\usepackage{multirow}
\usepackage[depth=3]{bookmark}

\DeclarePairedDelimiter{\ceil}{\lceil}{\rceil}

\begin{document}
\title{mmLSH: A Practical and Efficient Technique for Processing Approximate Nearest Neighbor Queries on Multimedia Data}
\titlerunning{mmLSH: An Efficient Technique for Processing ANNS on Multimedia Data}
% If the paper title is too long for the running head, you can set
% an abbreviated paper title here
%
\author{Omid Jafari\orcidID{0000-0003-3422-2755} \and
Parth Nagarkar\orcidID{0000-0001-6284-9251} \and
Jonathan Monta\~no\orcidID{0000-0002-5266-1615}
}
\authorrunning{O. Jafari, et al.}
% First names are abbreviated in the running head.
% If there are more than two authors, 'et al.' is used.
%
\institute{New Mexico State University, Las Cruces, US \and
\email{\{ojafari, nagarkar, jmon\}@nmsu.edu}}
\maketitle              % typeset the header of the contribution
\begin{abstract}
Many large multimedia applications require efficient processing of nearest neighbor queries. 
Often, multimedia data are represented as a collection of important high-dimensional feature vectors. Existing Locality Sensitive Hashing (LSH) techniques require users to find top-k similar feature vectors for each of the feature vectors that represent the query object. This leads to wasted and redundant work due to two main reasons: 1) not all feature vectors may contribute equally in finding the top-k similar multimedia objects, and 2) feature vectors are treated independently during query processing. Additionally, there is no theoretical guarantee on the returned multimedia results. In this work, we propose a practical and efficient indexing approach for finding top-k approximate nearest neighbors for \underline{m}ulti\underline{m}edia data using \underline{LSH} called \textit{mmLSH}, which can provide theoretical guarantees on the returned multimedia results. Additionally, we present a buffer-conscious strategy to speed up the query processing. Experimental evaluation shows significant gains in performance time and accuracy for different real multimedia datasets when compared against state-of-the-art LSH techniques.

\keywords{Approximate Nearest Neighbor Search \and High-Dimensional Spaces \and Locality Sensitive Hashing \and Multimedia Indexing}
\end{abstract}

\section{Introduction}
Finding nearest neighbors in high-dimensional spaces is an important problem in several multimedia applications. In multimedia applications, content-based data objects, such as images, audio, videos, etc., 
are represented using high-dimensional feature vectors, which are extracted using feature extraction algorithms. Locality Sensitive Hashing (LSH) \cite{Gionis:1999:SSH:645925.671516} is one of the most popular solutions for the approximate nearest neighbor (ANN) problem in high-dimensional spaces. Since it was first introduced in \cite{Gionis:1999:SSH:645925.671516}, many variants of LSH have been proposed \cite{Gan:2012:LHS:2213836.2213898,Huang:2015:QLH:2850469.2850470,Liu:2019,Tobias:2019}
that mainly focused on improving the search accuracy and/or the search performance of the given queries. LSH is known for two main advantages: its sub-linear query performance (in terms of the data size) and theoretical guarantees on the query accuracy. While the original LSH index structure suffered from large index sizes (in order to obtain a high query accuracy), state-of-the-art LSH techniques \cite{Gan:2012:LHS:2213836.2213898,Huang:2015:QLH:2850469.2850470} have alleviated this issue by using advanced methods such as \textit{Collision Counting} and \textit{Virtual Rehashing}. Thus, owing to their small index sizes, fast index maintenance, fast query performance, and theoretical guarantees on the query accuracy, we propose to build \textit{mmLSH} upon existing state-of-the-art LSH techniques. 

\noindent\textbf{Motivation of our work: Drawbacks of LSH on Multimedia Data.}\\
Popular feature extraction algorithms, such as SIFT, SURF (for images), Marsyas (for audio), etc., extract multiple features that collectively represent the object of interest for improved accuracy during retrieval. Hence, if a user wants to find similar objects to a given query object, nearest-neighbor queries have to be performed for every individual feature vector representing the query object (and then these intermediate results are aggregated to find the final object results (Section \ref{sec:mmLSH})). Existing techniques treat these individual feature vectors as independent of each other, and hence cannot leverage common elements between these feature vector queries for improved query performance. Most importantly, existing techniques can only give theoretical guarantees on the accuracy of the individual feature vector queries, but not on the final object results, unlike our proposed index structure, \textit{mmLSH}. 

\noindent\textbf{Contributions of this Paper:} In this paper, we propose a practical and efficient indexing approach for finding top-k approximate nearest neighbors for \underline{m}ulti\underline{m}edia data using \underline{LSH}, called \textit{mmLSH}. To the best of our knowledge, we are the first work to provide a rigorous theoretical analysis for answering approximate nearest neighbor queries on high-dimensional multimedia data using LSH. Our main contributions are:
\begin{itemize}[leftmargin=*]
	\item \textit{mmLSH} can efficiently solve approximate nearest neighbor queries for multimedia data while providing  rigorous theoretical analysis and guarantees on the accuracy of the query result.
	\item Additionally, we present an advanced buffer-conscious strategy to speedup the processing of a multimedia query. 
	\item Lastly, we experimentally evaluate \textit{mmLSH}, on diverse real multimedia datasets and show that \textit{mmLSH} can outperform the state-of-the-art solutions in terms of performance efficiency and query accuracy. 
\end{itemize}

\section{Related Work}
\label{sec:relWork}

LSH was originally proposed in \cite{Gionis:1999:SSH:645925.671516} for the Hamming distance and then later extended to the popular Euclidean distance \cite{Datar:2004:LHS:997817.997857}. C2LSH \cite{Gan:2012:LHS:2213836.2213898} introduced two main concepts of \textit{Collision Counting} and \textit{Virtual Rehashing} (explained in Section \ref{sec:prelim}) that solved the two main drawbacks of E2LSH \cite{Datar:2004:LHS:997817.997857}. QALSH \cite{Huang:2015:QLH:2850469.2850470} used these two concepts to build query-aware hash functions such that the hash value of the query object is considered as the anchor bucket during query processing. \cite{Sundaram:2013:SSS:2556549.2556574} proposes an efficient distributed LSH implementation which includes a cache-conscious hash table generation (to avoid cache misses to improve the index construction time). Our proposed cache-conscious optimization is to improve the efficiency of the query processing (and hence very different). \\
\textbf{Query Workloads in High-Dimensional Spaces: }Until now, only two works \cite{Nagarkar:CIKM,Jafari:2019:QCI:3323873.3325048} have been proposed that focus on efficient execution of query workloads in high-dimensional spaces. Neither of these two works provide a rigorous theoretical guarantees on the accuracy of the final result. In \cite{Nagarkar:CIKM}, the authors propose to efficiently execute set queries using a two-level index structure. The problem formulation, which is quite restrictive compared to our work, states that a point will be considered in the result set only if it satisfies a certain user-defined percentage of the queries in the query workload. In \cite{Jafari:2019:QCI:3323873.3325048}, the authors build a model based on the cardinality and dimensionality of the high-dimensional data to efficiently utilize the cache. The main drawback of these two approaches is that they require prior information that is found by analyzing past datasets. Hence the accuracy and efficiency of the index structures is determined by the accuracy of the models. Our proposed work is very different from these previous works: \textit{mmLSH} does not require any training models and additionally, we provide a theoretical guarantee on the accuracy of the returned results. 

\section{Key Concepts and Problem Specification}
\label{sec:prelim}

A \textit{hash function family} $H$ is ($R$, $cR$, $p_1$, $p_2$)-sensitive if it satisfies the following conditions for any two points $x$ and $y$
in a $d$-dimensional dataset $D \subset \mathbb{R}^d$: 
if $|x - y| \leq R$, then $Pr[h(x) = h(y)] \geq p_1$, and if $|x - y| > cR$, then $Pr[h(x) = h(y)] \leq p_2$.
Here, $p_1$ and $p_2$ are probabilities and $c$ is an approximation ratio. LSH requires that $c > 1$ and $p_1 > p_2$. 
In the original LSH scheme for Euclidean distance, each hash function is defined as $h_{\vec{a},b} (x) = \left\lfloor{\frac{\vec{a}.x + b}{w}}\right\rfloor,$ where $\vec{a}$ is a $d$-dimensional random vector and $b$ is a real number chosen uniformly from $[0, w)$, such that $w$ is the width of the hash bucket \cite{Datar:2004:LHS:997817.997857}.  
\textbf{C2LSH} \cite{Gan:2012:LHS:2213836.2213898} showed that two close points $x$ and $y$ collide in at least $l$ hash layers with a probability $1-\delta$, when the total number,
$m$, of hash layers are equal to:
$m = \ceil[\big]{\frac{\ln(\frac{1}{\delta})}{2(p_1-p_2)^2}(1+z)^2}$.

Given a multidimensional database $\mathcal{D}$, $\mathcal{D}$ consists of $n$ $d$-dimensional points that belongs to  $\mathbb{R}^d$. Each $d$-dimensional point $x_i$ is associated with an object $X_j$ s.t. multiple points are \textit{associated} with a single object. 
There are $S$ objects in the database ($1 \leq S \leq n$), and for each object $X_j$, $set(X_j)$ denotes the set of points that are associated with $X_j$. Thus, $n = \sum_{j=1}^{S} |X_j|$. 

Our goal is to provide a $k$-NN version of the $c$-approximate nearest neighbor problem for multidimensional objects. For this, we propose a notion of distance between multidimensional objects called {\it $\Gamma$-distance} (defined in Section \ref{sec:keydef}) and  that depends on  a percentage parameter that we denote by $\Gamma$.

Let us denote the $\Gamma$-distance between two objects $X_1$ and $X_2$ by $\Gamma dist(X_1, X_2)$. For a given query object $Q$, an object $X_j$ is a {\it $\Gamma$-$c$-approximate nearest neighbor of $Q$} if the $\Gamma$-distance between $Q$ and $X_j$ is at most $c$ times the $\Gamma$-distance between $Q$ and its true (or exact) nearest neighbor, $X_j^*$, i.e. $\Gamma dist(Q, X_j) \leq c\times \Gamma dist(Q, X_j^*)$, where $c>1$ is an \textit{approximation ratio}. Similarly, the $\Gamma k$-NN version of this problem states that we want to find $k$ objects that are respectively the $\Gamma$-$c$-approximate nearest neighbors of the exact $k$-NN objects of $Q$.

\section{\lowercase{mm}LSH}
\label{sec:mmLSH}
The Borda Count method \cite{borda:doi:10.1177/0192512102023004002} (along with other aggregation techniques \cite{borda:PEREZ2011951}) are popular existing techniques to aggregate results of multiple point queries to find similar objects in multimedia retrieval \cite{Arora:2018:HPS:3204028.3228393}. In order to find top-$k$ nearest neighbor \textit{objects} of multimedia object query $Q$, the existing methods find the top-$k'$ nearest neighbor \textit{points} for each query \textit{point} $q_i$, where $1\leq i \leq |set(Q)|$, $k$ is the number of desired results by the user, and $k'$ is an arbitrarily chosen number such that $k' >> k$ \cite{Arora:2018:HPS:3204028.3228393}. Once the top-$k'$ nearest neighbors of each query point $q_i$ is found, an overall score is assigned to each multimedia object $X_j$ based on the depth of the points (associated with $X_j$) in the top-$k'$ results for each of the point queries $q_i$ of $Q$. \textbf{Drawbacks of this approach: } 1) there is no theoretical guarantee for the accuracy of the returned top-$k$ result objects, and 2) all query points $q_i$ of the query object $Q$ are executed independently of each other. Hence, if a query point takes too long to execute as compared to others, then the overall processing time is negatively affected. Our proposed method, \textit{mmLSH}, solves both these drawbacks as explained in the next sections.

\subsection{Key Definitions of mmLSH}
\label{sec:keydef}

\subsubsection{Justification for using $R$-Object Similarity and $\Gamma$-distance: } In order to define two \textit{Nearby Objects}, we first define a similarity/distance measure between two objects in the context of ANN search. Note that, there have been several works that have defined voting-based similarity/distance measures between two multimedia objects, especially images \cite{Zhou:2011:LSI:2072298.2072012,Jegou:2010:IBL:1718320.1718326,Wu:2009}. 
Also, region-based algorithms have been explored in the past, whose main strategy is to divide the query object in regions and compare these with regions of the dataset objects via some region distance, and then aggregate the resulting distances \cite{BPM}. In this work we define the $\Gamma$-distance as a way to measure distances between objects as a whole.  Our definition follows the naive strategy of comparing all pairs of features of the objects but it  uses a percentage parameter  $\Gamma$ to ensure two identical objects have a near zero  distance. Another key advantage of the proposed distance is that it allows us to provide theoretical guarantees for our results. 

\begin{definition}[$R$-Object Similarity]
	Given a radius R, the R-Object Similarity between two objects $Q$ and $X_j$, that consists of $set(Q)$ and $set(X_j)$ $d$-dimensional feature vectors respectively, is defined as: 
	\begin{equation}
	sim(Q, X_j, R) = \frac{|\{q \in set(Q), x_i \in set(X_j) : \; ||q, x_i|| \leq R\}|}{|set(Q)|.|set(X_j)|}
	\end{equation}  
\end{definition}

Note that, $0 \leq sim(Q, X_j, R) \leq 1$. $sim(Q, X_j, R)$ will be equal to 1 if every point of $Q$ is a distance at most $R$ to every point of  $X_j$ (e.g. if you are comparing an entirely green image with another green image - and assuming the feature vectors were based on the color of the pixel. But if you are comparing  two identical images, then $sim(Q, X_j, R) < 1$ if $R$ is less than the largest among $||q,x_i||$). Since the number of points associated with two objects can be different, we normalize the similarity w.r.t the points associated with $Q$ and $X_j$. 

\begin{definition}[$\Gamma$-distance]
	Given a  two objects $Q$ and $X_j$, the $\Gamma$-distance between $Q$ and $X_j$ is defined as: 
	\begin{equation}
	\label{eqn:gamma_dist}
	\Gamma dist(Q, X_j) = \inf\{R\mid sim(Q, X_j, R)
	\geq \Gamma\} 
	\end{equation}  
\end{definition}

In order to find points that are within $R$ distance, we use the \textit{Collision Counting} method that is introduced in C2LSH \cite{Gan:2012:LHS:2213836.2213898}: given a query point $q$ and $m$ projections, a point $x$ is considered a \textit{candidate} if $x$ collides with $q$ (also called the \textit{collision count}) in at least $l$ projections (see Section \ref{sec:prelim}). \cite{Gan:2012:LHS:2213836.2213898} proves that if $||q, x|| \leq R$, then the collision count of $x$ with respect to $q$ (denoted by $cc(q, x)$) will be at least $l$ with a success probability of $1-\delta$.

We define a \textit{Collision Index} (denoted by  $ci(Q, X_j)$ that determines how close two objects are based on
the number of points between the two objects that are considered \textit{close} (i.e. the collision counts between the points of the two objects is greater than the collision threshold $l$).

\begin{definition}[Collision Index of Two Objects]
	Given two objects $Q$ and $X_j$, the collision index of $X_j$ with respect to $Q$ is defined as: 
	\begin{equation}
	\label{eqn:ci}
	ci(Q, X_j) = \frac{|\{q \in set(Q), x_i \in set(X_j) : \; cc(q, x_i) \geq l\}|}{|set(Q)|.|set(X_j)|}
	\end{equation}
\end{definition}
The \textit{Collision Index} between two objects depends on how many nearby points are considered as candidates between the two objects.  Thus, in turn, the accuracy of the collision index depends on the accuracy of the collision counting process (which is shown to be very high \cite{Gan:2012:LHS:2213836.2213898,Huang:2015:QLH:2850469.2850470}). Hence we define an object $X_j$ to be a {\it $\Gamma$-candidate} if the collision index between them is  greater than or equal to $(1-\varepsilon)\Gamma$, where $\varepsilon>\delta$ is an approximation factor which we set to $2\delta$.

\begin{definition}[$\Gamma$-candidate Objects]
	Given  an object query $Q$ and an object  $X_j$, we say that $X_j$ is a $\Gamma$-candidate with respect to $Q$ if $ci(Q, X_j) \geq (1-\varepsilon)\Gamma$. 
\end{definition}

Additionally, we define an object to be a {\it $\Gamma$-false positive } if it is a $\Gamma$-candidate but its $\Gamma$-distance to the object query is too high. 
\begin{definition}[$\Gamma$-False Positives]
	Given an object query $Q$ and an object $X_j$, we say $X_j$ is a \textit{$\Gamma$-false positive} with respect to $Q$ if we have $ci(Q, X_j) \geq \Gamma+\frac{\beta}{2}$ but $\Gamma dist(Q, X_j)>cR$.
\end{definition}

\begin{algorithm}[t]                      % enter the algorithm environment
	\caption{k-Nearest Neighbor Object}
	\label{alg:knn} 
	\begin{algorithmic}[1]                    % enter the algorithmic environment
		\While{TRUE}
		\If{$|\{X_j | X_j \in \mathcal{CL} \wedge \Gamma dist(Q, X_j) \leq cR\} | \geq k$}
		\State return the top-$k$ objects from $\mathcal{CL}$;
		\EndIf
		
		\For{$g=1; \;\;g\leq m; \;\;g++$}
		\For{$i=1; \;\; i \leq |set(Q)|; \;\; i++$}
		\State $CountCollisions(q_i)$;
		\State $\forall_{X_j\in S} $ Update $ci(Q, X_j)$;
		\EndFor
		
		\If{$|\mathcal{CL}| \geq k+\beta S$}
		\State return the top-$k$ objects from $\mathcal{CL}$;
		\EndIf 	
		
		\EndFor
		\State $R = c^{numIter}$;
		\State $numIter++$;
		\EndWhile
%		\EndProcedure
	\end{algorithmic}
%\vspace*{-0.05in}	
\end{algorithm}

\subsection{Design of mmLSH}
\label{sec:design}

During query processing, instead of executing the query points of $Q$ independently, we execute them one at a time in each projection (Lines 5-6 in Algorithm \ref{alg:knn}). The function $CountCollisions(q_i)$ (Line 7), an existing function from C2LSH, is responsible for counting collisions of query points and points in the database. The \textit{Buffer-conscious Optimizer} module (Section \ref{sec:bufferStrategy}) is responsible for finding an effective strategy to utilize the buffer to speed up the query processing. This module decides which query and the hash bucket should be processed next.
The \textit{$\Gamma$-Analyzer} module is in charge of calculating the \textit{collision indexes} (Section \ref{sec:keydef}) for objects in the database and for checking/terminating the process if the terminating conditions are met. 

\noindent\textbf{Terminating Conditions for mmLSH:}
The existing solution (Section \ref{sec:mmLSH}) finds top-$k'$ candidates for each query point in $Q$ and then terminates. Instead, \textit{mmLSH} stops when top-$k$ objects are found. These conditions  guarantee that $\Gamma$-$c^2$-approximate NN are found with constant probability (Section \ref{sec:the}):

\begin{enumerate}[label=$\mathcal{T}\arabic*)$]
	
	\item At certain point at level-$R$,  at least $k + \beta S$ $\Gamma$-candidates have been found, where  $\beta S$ is the allowed number of false positives. (Line 14, Algorithm \ref{alg:knn})
	
	\item At the end  of level-$R$, there exists at least $k$  $\Gamma$-candidates  whose $\Gamma$-distance to $Q$ is at most $R$. (Line 6, Algorithm \ref{alg:knn})

\end{enumerate}

\subsection{Buffer-conscious Optimization for Faster Query Processing}
\label{sec:bufferStrategy}

Another goal of \textit{mmLSH} is to improve the processing speed of finding nearest neighbors of a given multimedia query object by efficiently utilizing a given buffer space. In order to explain our strategy, we first analyze the two expensive operations and the two naive strategies for solving the problem. The two main dominant costs in LSH-based techniques are the \textit{Algorithm time} (which is the time required to find the candidate points that collide with the given query point) and the \textit{Index IO time} (which is the time needed to bring the necessary index files from the secondary storage to the buffer).

\noindent 
Due to space limitations, we do not present a formal cost model for this process. Our main focus is on minimizing the above mentioned two dominant costs: \textit{algorithm time} and \textit{index IO time}. 
We want to store the most important hash buckets from the cache to maximize total number of buffer hits.

\begin{table}[t]
	\centerline{
		\begin{tabular}{|c|c|c|c|}\hline
			{\bf } & {\bf Total} & {\bf AlgTime} & {\bf IndexIOTime}  \\ \hline\hline
			NS1: LRU & 263.6 & 117.6 & 146.0 \\ \hline 
			NS2: Per-Bucket & 279.5 & 260.8 & 18.7 \\ \hline 
		\end{tabular}
	}
	\caption{Performance Comparison of Naive Strategies NS1 and NS2 (in sec)}\label{tab:bufferNaiveComp}
	\vspace*{-0.2in}	
\end{table}

\begin{figure}[t]
	\centering
	\begin{subfigure}[b]{0.31\textwidth}
		\centering
		{\setlength{\fboxsep}{0pt}
			\setlength{\fboxrule}{0.2pt}
			\includegraphics[width=\linewidth]{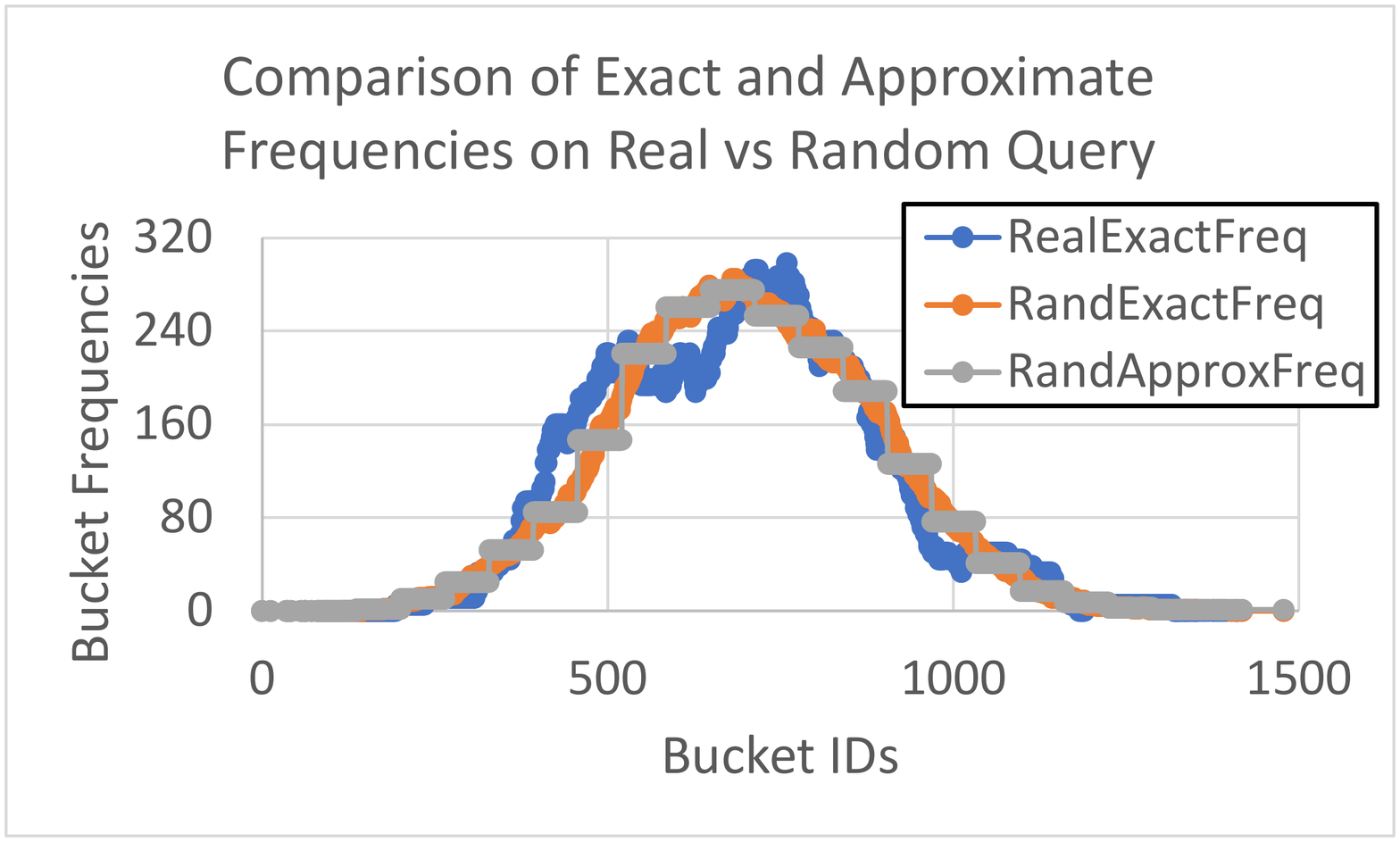}
		}
		
	\end{subfigure}\quad
	\begin{subfigure}[b]{0.31\textwidth}
		\centering
		{\setlength{\fboxsep}{0pt}
			\setlength{\fboxrule}{0.2pt}
			\includegraphics[width=\linewidth]{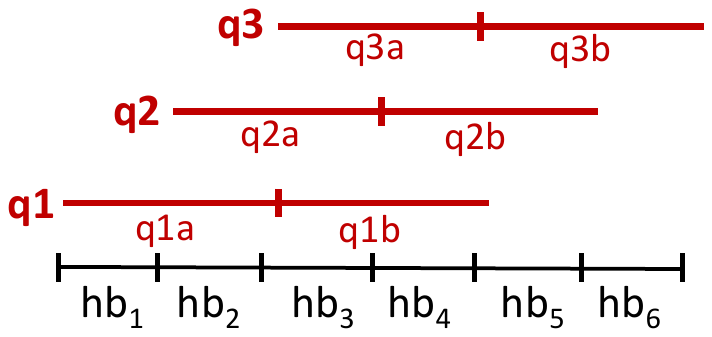}
		}
	\end{subfigure}\quad
	\begin{subfigure}[b]{0.31\textwidth}
		\centering
		{\setlength{\fboxsep}{0pt}
			\setlength{\fboxrule}{0.2pt}
			\includegraphics[width=\linewidth]{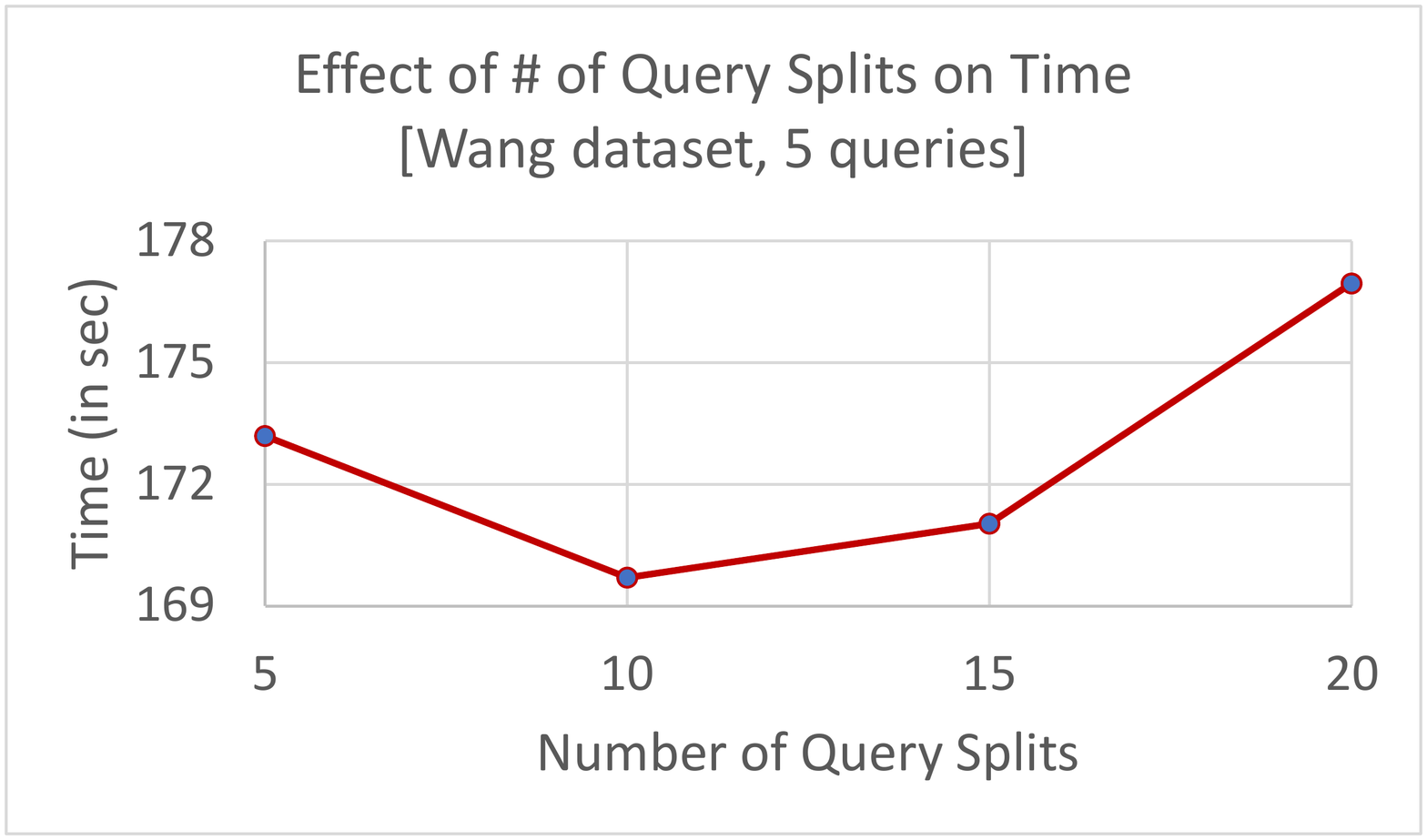}
		}
	\end{subfigure}
	%\vspace*{-0.15in}
	\caption{(a) Query Split Strategy, (b) Effect of \# of Query Splits on Time, (c) Comparison of Exact and Approx. Frequencies on a Real and Random Query}\label{fig:querySplitComp}
	\vspace*{-0.2in}
\end{figure}

\noindent\textbf{Naive Strategy 1: Using LRU Eviction Strategy on a given buffer.} Given $Q$, we first find the hash bucket locations for each of point queries of $Q$. In order to make the LRU (Least Recently Used) eviction strategy more effective, in each hash function $m$, we order the execution of point queries of $Q$ according to the hash bucket locations from left to right. During query processing, we evict the LRU index files from the buffer when the buffer gets full. 
	
\noindent\textbf{Naive Strategy 2: Using a Per-bucket Execution Strategy.} Since one of our goals is to reduce the $indexIOCost$, we also consider a Per-bucket execution strategy. Given a query object $Q$,
	we bring each useful hash bucket, $hb$, into the buffer, and for every $q$ in $Q$ that requires $hb$, we perform \textit{Collision Counting} to find the candidate nearest neighbor points to the point query. In Figure \ref{fig:querySplitComp} (b), this strategy would bring in $hb_1$ (then solve for q1), then bring $hb_2$ (and then solve for q1 and q2, since both queries using $hb_2$) and so on. 
	
As seen from Table \ref{tab:bufferNaiveComp}, NS1, due to its simplicity, has a lot smaller \textit{AlgTime} than NS2, but the \textit{IndexIOTime} of NS1 is a lot more than that of NS2. 
NS2 needs to find the queries that require the particular hash bucket brought into the main memory. While this process can be sped up with more index structures, it is still an expensive operation to check for all queries for each bucket, in each projection, for each radius. In each projection (in each radius), since a hash bucket is brought into the buffer only once for NS2, \textit{IndexIOTime} is the lowest. 

Hence we propose an efficient and effective buffer-conscious strategy that reduces the \textit{IndexIOTime} of NS1 without adding significant overhead to the \textit{AlgTime} (thus resulting in lower total time). Instead of using LRU, our eviction strategy is to evict a bucket based on the following three intuitive criteria. \textbf{Criterion 1}: if the bucket was not added to the buffer \textit{very recently}. When a bucket is added to the buffer, then there is a high likelihood that another query might use it in the near future.  
\textbf{Criterion 2}: if the bucket is \textit{far away} from the current query. It is more beneficial to evict another bucket that is far apart in the projection than the position of the current query.  \textbf{Criterion 3}: if the number of queries that still require this bucket (called \textit{frequency} of the bucket) is the lowest \textit{after} the first two criteria are satisfied. Criterion 3 ensures that a bucket needed by a lot of queries is not evicted. Due to space limitations, we do not formally show the pseudo-code for the eviction process.  

The main challenge in the above criteria is that the main criterion (Criterion 3) requires \textit{mmLSH} to know the frequencies of each bucket in each projection at each radius to decide which bucket to evict. This is an unfair expensive requirement to have during online query processing. Across different real multimedia datasets, we \textit{observed} that the frequencies of buckets on collection of queries associated with an object showed a behavior very similar to a collection of randomly chosen queries. Figure \ref{fig:querySplitComp} (a) shows that the bucket frequencies for a randomly chosen query from the Wang \cite{wangdataset} dataset exhibit a similar pattern for a set of randomly generated point queries on a single projection. 

\noindent\textbf{Projection-Dividing Strategy:} We use the above stated important observation to estimate the frequencies of buckets during \underline{offline processing}. The following is the overview: 1) \textit{We divide a projection into different regions}. Too few divisions will result in a high error between the estimated and actual frequencies. Too many divisions will also result in a high error because if the frequency behavior is slightly deviated than the random queries' behavior, then we assign same frequencies as that of the random queries. For this paper, we empirically decide the total number of divisions (set to 10).
2) \textit{We calculate the average frequencies for the random point queries for each region}, and assign the region's frequency to each bucket in that particular region. 
3) For each projection, \textit{we assign approximate frequencies to all buckets in each projection}.

\noindent\textbf{Query-Splitting Strategy:} In order to utilize the buffer more effectively, we split the queries into multiple sub-queries and reorder the execution of the queries based on these new set of queries. In Figure \ref{fig:querySplitComp} (b), the query execution order will change from $q1, q2, q3$ to $q1a, q2a, q1b, q3a, q2b, q3b$ to utilize the buffer more effectively. Note that too many splits is still detrimental due to the increase in the overall Algorithm time (like Naive Strategy 2).  Figure \ref{fig:querySplitComp} (c) shows the effect of different number of splits on the overall time. In this work, we empirically find a good split (that is found during the indexing phase, and set to 10). We leave finding the optimal split using advanced cost models to future work. 

\subsection{Theoretical Analysis}\label{sec:the}

\subsubsection{Guarantees on the Stopping Conditions}
\label{sec:guarantee}

The goal of this section is to prove the following theorem which provides a theoretical guarantee to \textit{mmLSH}. For simplicity we perform the theoretical analysis for the case $k=1$, the general case  follows similarly after simple  adaptations.

\begin{theorem}\label{mainTh}
	Let $Q$ be a query object and let $L=\min\{|X|\mid X\in \mathcal{D}\}.$ If $$\Gamma \geq \sqrt{ \max\left\{\frac{\ln \frac{1}{\delta}}{(\varepsilon-\delta)^2|Q|L},\, \frac{2\ln \frac{2}{\beta}}{\beta^2|Q|L} \right\}},$$
	then  \textit{mmLSH} finds a $\Gamma$-$c^2$-approximate NN with constant high probability.
\end{theorem}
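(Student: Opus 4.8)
Take $k=1$ and write $|Q|=|set(Q)|$. Let $X^{*}$ be the true $\Gamma$-nearest neighbor of $Q$, put $r^{*}=\Gamma dist(Q,X^{*})$, and let $R^{\dagger}$ be the smallest radius of the form $c^{i}$ with $R^{\dagger}\ge r^{*}$, so that $R^{\dagger}<c\,r^{*}$ (the case $r^{*}$ very small being handled by a rescaling). All of the probability is spent at level $R^{\dagger}$; the \textit{Virtual Rehashing} mechanism guarantees that for a fixed pair of points $cc(q,x_{i})$ — and hence every $ci(Q,X_{j})$ — only increases as $R$ grows, which lets conclusions drawn at level $R^{\dagger}$ be carried to any earlier level at which \textit{mmLSH} halts. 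The plan is to introduce two good events, controlled respectively by the two branches of the hypothesis on $\Gamma$, and then argue by cases on the terminating conditions $\mathcal{T}1,\mathcal{T}2$.

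\noindent\emph{Event 1 (the true NN is admitted at level $R^{\dagger}$).} By definition of $\Gamma dist$ we have $sim(Q,X^{*},R^{\dagger})\ge\Gamma$, so at least $\Gamma|Q||X^{*}|$ of the $|Q||X^{*}|$ pairs $(q,x_{i})$, $q\in set(Q),x_{i}\in set(X^{*})$, lie within distance $R^{\dagger}$, and by the C2LSH collision-counting guarantee each such pair satisfies $cc(q,x_{i})\ge l$ with probability at least $1-\delta$. Thus $ci(Q,X^{*})\cdot|Q||X^{*}|$ is a sum of $|Q||X^{*}|\ge|Q|L$ indicators with expectation at least $(1-\delta)\Gamma|Q||X^{*}|$, and I want it to reach $(1-\varepsilon)\Gamma|Q||X^{*}|$; the margin is $(\varepsilon-\delta)\Gamma|Q||X^{*}|$, so a Hoeffding-type inequality bounds the failure probability by $\exp\!\bigl(-2(\varepsilon-\delta)^{2}\Gamma^{2}|Q|L\bigr)$ — the $\Gamma^{2}$ appearing because the margin is a $\Gamma$-multiple of the number of pairs while the Hoeffding denominator is the number of pairs — and this is at most $\delta$ exactly under the first branch $\Gamma^{2}\ge\ln(1/\delta)/\bigl((\varepsilon-\delta)^{2}|Q|L\bigr)$. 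Hence with probability at least $1-\delta$, $X^{*}$ is a $\Gamma$-candidate at level $R^{\dagger}$, and as $\Gamma dist(Q,X^{*})=r^{*}\le R^{\dagger}$ it also witnesses $\mathcal{T}2$ at that level.

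\noindent\emph{Event 2 (few $\Gamma$-false positives at level $R^{\dagger}$).} If $\Gamma dist(Q,X_{j})>cR^{\dagger}$ then $sim(Q,X_{j},cR^{\dagger})<\Gamma$, so fewer than $\Gamma|Q||X_{j}|$ of its pairs lie within $cR^{\dagger}$, while each of the remaining pairs is at distance $>cR^{\dagger}$ and, by the complementary far-point property of collision counting, is counted in $ci$ only with small probability; consequently $ci(Q,X_{j})$ can exceed $\Gamma+\tfrac{1}{2}\beta$ only if these distant pairs yield a surplus of about $\tfrac{1}{2}\beta|Q||X_{j}|\ge\tfrac{1}{2}\beta|Q|L$ spurious collisions, an event whose probability a Hoeffding bound puts at most $\exp\!\bigl(-\tfrac{1}{2}\beta^{2}\Gamma^{2}|Q|L\bigr)\le\tfrac{1}{2}\beta$ under the second branch of the hypothesis. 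Summing over the at most $S$ objects (or bounding the expected number of $\Gamma$-false positives and invoking Markov's inequality) then gives, with probability bounded below by a positive constant, fewer than $\beta S$ $\Gamma$-false positives at level $R^{\dagger}$.

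\noindent\textbf{Conclusion and the main obstacle.} On the intersection of Events 1 and 2 I would finish by the usual C2LSH-style case split. If \textit{mmLSH} runs up to level $R^{\dagger}$ it stops by $\mathcal{T}2$ and, by Event 1, reports an object at $\Gamma$-distance at most $R^{\dagger}<c\,r^{*}\le c^{2}r^{*}$. If instead it stops earlier by $\mathcal{T}1$ at a level $R\le R^{\dagger}$, it holds $\ge 1+\beta S$ $\Gamma$-candidates; by the monotonicity of $ci$ in $R$ these are still $\Gamma$-candidates at level $R^{\dagger}$, and Event 2 bounds the number of them that are $\Gamma$-false positives by $\beta S$, so at least one of the reported objects is not a $\Gamma$-false positive and hence has $\Gamma$-distance at most $cR^{\dagger}<c^{2}r^{*}$. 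Either way the output is a $\Gamma$-$c^{2}$-approximate nearest neighbor, and the two failure probabilities combine to leave success probability bounded below by a positive constant. The step I expect to be genuinely delicate is the concentration used in Events 1 and 2: the collision indicators of different feature pairs are \emph{not} independent — pairs share feature vectors, and all pairs are scored against the same $m$ projections — so Hoeffding's inequality does not apply verbatim; a rigorous treatment would instead exploit the mutual independence of the $m$ projections (writing each relevant count as a bounded-difference function of the projections and applying McDiarmid's inequality) or restrict to a large family of pairwise-disjoint pairs at a controlled cost in the constants. A secondary, more bookkeeping-flavored point is that, unlike in C2LSH, \textit{mmLSH} uses a lower threshold $(1-\varepsilon)\Gamma$ to admit candidates and a higher one $\Gamma+\tfrac{1}{2}\beta$ to declare false positives; one must verify that this band, with $\varepsilon=2\delta$, is wide enough for the termination step and narrow enough that a $\Gamma$ of the stated magnitude suffices.
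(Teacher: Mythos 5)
Your proposal follows essentially the same route as the paper: your Events 1 and 2 are precisely the paper's properties $\mathcal{P}1$ and $\mathcal{P}2$, established by the same Hoeffding-plus-Markov argument under the two branches of the hypothesis on $\Gamma$, followed by the same case split on the terminating conditions $\mathcal{T}1$ and $\mathcal{T}2$. The independence issue you flag --- that the indicators $Y_{i,j}$ share feature vectors and are all scored against the same $m$ projections, so Hoeffding's inequality does not apply verbatim --- is a genuine concern, but the paper's own proof applies Hoeffding to these same variables without comment, so your argument is no less rigorous than the published one and your proposed McDiarmid/bounded-differences repair is a sensible way to make both airtight.
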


For the proof of this theorem we  need the following lemma. For this, we   consider the following two properties for a given query object $Q$ and level $R$:

\begin{enumerate}[label=$\mathcal{P}\arabic*)$]
	\item  If $X$ is an object such that $\Gamma dist(Q,X)\leq R$ then $X$ is a $\Gamma$-candidate.
	
	\item The number of $\Gamma$-false positives is at most $\beta S$.
\end{enumerate}

\noindent In the next lemma, we show that the above properties hold with high probability.

\begin{lemma}\label{theLemma}
	Let  $\delta$ be the probability defined in Section \ref{sec:prelim} and $\varepsilon>\delta$ as defined in Section \ref{sec:keydef}, then if $\Gamma$ satisfies the inequality in Theorem \ref{mainTh} we have $Pr[\mathcal{P}1]\geq 1-\delta$ and $Pr[\mathcal{P}2]> \frac{1}{2}$.
\end{lemma}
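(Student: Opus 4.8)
The plan is to bound the two failure probabilities separately using concentration inequalities, treating the collision index $ci(Q,X)$ as an empirical average of Bernoulli-type indicator variables over the $|Q|\cdot|X|$ pairs of feature vectors.

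For $\mathcal{P}1$: Suppose $\Gamma dist(Q,X)\leq R$. By the definition of $\Gamma$-distance, $sim(Q,X,R)\geq\Gamma$, so at least a $\Gamma$-fraction of the pairs $(q,x_i)$ satisfy $\|q,x_i\|\leq R$. For each such pair, the C2LSH guarantee (Section \ref{sec:prelim}) says $cc(q,x_i)\geq l$ with probability at least $1-\delta$. I would define, for each pair, an indicator that the pair is ``counted'' in $ci(Q,X)$, so that $ci(Q,X)$ is the average of these $N := |Q|\cdot|X|$ indicators. Its expectation is at least $(1-\delta)\Gamma$. We want $ci(Q,X)\geq(1-\varepsilon)\Gamma$, i.e.\ a downward deviation of at least $(\varepsilon-\delta)\Gamma$ from the mean. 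Applying Hoeffding's inequality, the failure probability is at most $\exp\!\left(-2N(\varepsilon-\delta)^2\Gamma^2\right)$. Since $N\geq|Q|L$ (as $L=\min_{X}|X|$ and $|set(X_j)|\geq L$), and using the hypothesis $\Gamma^2\geq \frac{\ln(1/\delta)}{(\varepsilon-\delta)^2|Q|L}$, this bound is at most $\exp(-\ln(1/\delta))\cdot(\text{something})$ — more precisely the factor $2$ in Hoeffding makes it $\leq\delta^2\leq\delta$, giving $Pr[\mathcal{P}1]\geq 1-\delta$. (I would double-check whether the intended argument is a union bound over objects or a per-object statement; the phrasing ``$X$ is an object such that'' reads as a fixed $X$, so no union bound seems needed, and the factor-of-2 slack absorbs minor discrepancies.)

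For $\mathcal{P}2$: Here I would bound the probability that any individual $\Gamma$-false-positive event occurs and then control the expected number of false positives, using Markov's inequality to get the bound $\frac12$. Fix an object $X$ with $\Gamma dist(Q,X)>cR$; this means $sim(Q,X,cR)<\Gamma$, hence $sim(Q,X,R)<\Gamma$ as well, so fewer than a $\Gamma$-fraction of pairs are within distance $R$. For pairs with $\|q,x_i\|>cR$ — wait, more carefully: a pair contributes to $ci$ only by a collision count exceeding $l$; for ``far'' pairs the C2LSH false-collision probability is small, but the cleanest route is to observe that the expected value of $ci(Q,X)$ is bounded above by roughly $sim(Q,X,R)$ plus a small false-collision term, so $E[ci(Q,X)] < \Gamma$ (or at most $\Gamma$ plus lower-order). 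Then $X$ being a $\Gamma$-false positive requires $ci(Q,X)\geq\Gamma+\beta/2$, an upward deviation of at least $\beta/2$; Hoeffding gives per-object probability at most $\exp\!\left(-2N(\beta/2)^2\right)=\exp\!\left(-N\beta^2/2\right)\leq\exp\!\left(-|Q|L\beta^2/2\right)$, which by the hypothesis $\Gamma^2\geq\frac{2\ln(2/\beta)}{\beta^2|Q|L}$ combined with $\Gamma\leq 1$ is at most $\beta/2$. Hence the expected number of $\Gamma$-false positives over all $S$ objects is at most $(\beta/2)S<\beta S$, and by Markov's inequality $Pr[\#\{\Gamma\text{-false positives}\}\geq\beta S]\leq \frac{(\beta/2)S}{\beta S}=\frac12$, so $Pr[\mathcal{P}2]>\frac12$ (strictly, using that the inequality is not tight).

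The main obstacle I anticipate is pinning down exactly what $E[ci(Q,X)]$ is as a function of the pairwise distances — in particular handling pairs whose distance lies strictly between $R$ and $cR$, where C2LSH gives no clean bound on the collision probability. The honest fix is either to absorb these into the $\varepsilon$ and $\beta$ slack (arguing the "ambiguous band" contributes a controllably small bias), or to invoke the monotonicity of collision probability in distance so that the worst case is still captured by $sim(Q,X,R)$ and $sim(Q,X,cR)$. Getting the direction of every inequality right (which way the bias pushes $ci$ relative to $\Gamma$) and making sure the independence assumptions needed for Hoeffding across the $N$ pairs are legitimate — the collision counts across pairs sharing a common feature vector are not independent — is the delicate part; I would likely need to either argue the dependence is "positive" in a way that only helps, or restrict attention to a suitable independent sub-collection, or appeal to a concentration bound that tolerates the dependence structure present here.
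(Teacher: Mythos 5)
Your overall strategy is the same as the paper's: model each pair $(q_i,x_j)$ as a Bernoulli indicator, apply Hoeffding to $ci(Q,X)$, and for $\mathcal{P}2$ bound the expected number of false positives and finish with Markov. Your treatment of $\mathcal{P}1$ matches the paper's essentially line for line (expectation at least $(1-\delta)\Gamma$, downward deviation $(\varepsilon-\delta)\Gamma$, exponent $2(\varepsilon-\delta)^2\Gamma^2|Q||X|$, then the hypothesis on $\Gamma$ gives $\delta^2\leq\delta$).

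The gap is in $\mathcal{P}2$, exactly at the point you flag as ``the main obstacle'': your tentative claim that $E[ci(Q,X)]<\Gamma$ (or $\Gamma$ plus lower order) for a far object is false in general, and the argument as written does not close without it. When $\Gamma dist(Q,X)>cR$ you only know that fewer than a $\Gamma$-fraction of pairs lie within $cR$; the pairs in the band $(R,cR]$ can collide with probability up to $1$, so the only available bound is $p=Pr[cc(q_i,x_j)\geq l]\leq \Gamma\cdot 1+(1-\Gamma)\cdot\frac{\beta}{2}=\Gamma+\frac{\beta}{2}-\frac{\beta\Gamma}{2}$, obtained from $Pr[\neg A]\geq Pr[\neg A\mid C]\,Pr[C]\geq(1-\frac{\beta}{2})(1-\Gamma)$. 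This exceeds $\Gamma$ whenever $\Gamma<1$, so the deviation from the mean to the threshold $\Gamma+\frac{\beta}{2}$ is not $\frac{\beta}{2}$ but only $\frac{\beta\Gamma}{2}$. Hoeffding then gives the per-object bound $\exp\bigl(-2(\frac{\beta}{2})^2\Gamma^2|Q||X|\bigr)$, and it is precisely the extra $\Gamma^2$ in this exponent that makes the second term $\frac{2\ln(2/\beta)}{\beta^2|Q|L}$ in the hypothesis on $\Gamma$ the right one: under that hypothesis the bound is exactly $\leq\frac{\beta}{2}$, so $E[|FP|]\leq\frac{\beta}{2}S$ and Markov yields $Pr[|FP|\leq\beta S]\geq\frac12$. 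Your version reaches the same numerical endpoint only by invoking $\Gamma\leq 1$ on top of an expectation bound you cannot justify; replacing that step with the computation above repairs the proof and recovers the paper's argument.

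One further remark: your concern about Hoeffding requiring independence of the $Y_{i,j}$ (pairs sharing a feature vector share hash values) is legitimate, but the paper's own proof applies Hoeffding to these variables without addressing the dependence either, so this is a shared limitation rather than a defect specific to your write-up.
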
	
\begin{proof}
	For  $q_i\in set(Q)$ and $x_j\in set(X)$, let $A$  be the condition $cc(q,x_j)\geq l$, $B$ be  $||q,x_j||\leq R$, and $C$  be  $||q,x_j||> cR$. From the proof of  Lemma 1 in ~\cite{Gan:2012:LHS:2213836.2213898} we know the following inequalities hold:
	\begin{equation}\label{ins}
	Pr[A|B]\geq  1-\delta\quad \mbox{and} \quad Pr[\neg A | C]\geq (1-\exp(-2(\alpha-p_2)^2m))\geq (1-\frac{\beta}{2}).
	\end{equation}
	We proceed to prove  inequality $Pr[\mathcal{P}1]\geq 1-\delta$. Assume $\Gamma dist(Q,X)\leq R$, which is equivalent to 
 $Pr[||q_i,x_j||\leq R]\geq \Gamma,$
	where $q_i\in set(Q)$ and $x_j\in set(X)$.  Therefore, 
	$
	p=Pr[A]\geq Pr[A \wedge B]= Pr[A| B] Pr[ B]\geq (1-\delta)\Gamma,
	$
	where the last inequality follows from the left hand side inequality in Equation \eqref{ins}.
	
	For every $1\leq i\leq |Q|$ and $1\leq j\leq |X|$, let $Y_{i,j}\sim Ber(1-p)$ be a Bernoulli random variable which is equal to $1$ if $cc(q_i,x_j)< l$. Then 
	\begin{align*}
		Pr[ci(Q,X)\geq (1-\varepsilon)\Gamma]&=1-Pr[\sum_{i,j} Y_{i,j}\geq (1-(1-\varepsilon)\Gamma) |Q| |X|]\\
		&\geq 1-\exp(-2(\varepsilon-\delta)^2\Gamma^2)|Q||X|,
	\end{align*}
	where the  inequality follows from Hoeffding's Inequality. Therefore for the given range of $\Gamma$ we have
	$Pr[\mathcal{P}1]=Pr[ci(Q,X)\geq (1-\varepsilon)\Gamma]\geq 1-\delta.$
	
	We  continue with  the  proof of $Pr[\mathcal{P}2]> \frac{1}{2}$. For this, we assume $\Gamma dist(Q,X)>cR$. Which is equivalent to
	$Pr[||q_i,x_j||> cR]\geq 1-\Gamma.$
	Then 
		$$
		1-p=Pr[\neg A]\geq Pr[\neg A \wedge C]= Pr[\neg A| C] Pr[ C]\geq (1-\frac{\beta}{2})(1-\Gamma)
		$$ 
	where the last inequality follows from  the right hand side inequality in Equation \eqref{ins}. Therefore, 
	$p\leq \Gamma+\frac{\beta}{2}-\frac{\beta\Gamma}{2}.$ 
	For every $1\leq i\leq |Q|$ and $1\leq j\leq |X|$, let $Y_{i,j}\sim Ber(1-p)$ be a Bernoulli random variable defined as above. Thus 
	 $$
		Pr[ci(Q,X)\geq \Gamma+ \frac{\beta}{2}]
		=Pr[\sum_{i,j} Y_{i,j}\leq  (1-\Gamma-\frac{\beta}{2}-\Delta) |Q| |X|]
		$$ 
	for some $\Delta>0$. Thus, from Hoeffding's Inequality  it follows that 
$$
		q = Pr[ci(Q,X)\geq \Gamma+ \frac{\beta}{2}]
		< \exp(-2(\Gamma+\frac{\beta}{2}-p)^2|Q||X|)
		\leq \exp(-2\big(\frac{\beta}{2}\big)^2\Gamma^2|Q||X|).
		$$
	Let $FP$ be the set of false positives, that is 
	$FP=\{X\in \mathcal{D}\mid ci(Q,X)\geq \Gamma+ \frac{\beta}{2}\text{ and }\Gamma dist(Q,X)>cR\},$
	then  $Pr[\mathcal{P}2]=Pr[|FP| \leq \beta S]$. Therefore, it suffices to show the latter is larger than $\frac12$.
	
	Let $X_1,\ldots, X_S$ denote the elements of $\mathcal{D}$. For every $1\leq i\leq S$ let  $Z_i\sim Ber(q)$ be  the Bernoulli random variable  which is equal to one if $X_i\in FP$. Then the expected value of the size of $FP$ satisfies
	\begin{align*}
		E(|FP|)=E(\sum_i Z_i)&=\sum_i E(Z_i)
		=S\cdot q<S\cdot \exp(-2\big(\frac{\beta}{2}\big)^2\Gamma^2|Q||X|).
	\end{align*}
	Therefore, from Markov's Inequality it follows that 
	\begin{align*}
		Pr[|FP|]\leq \beta S]
		1-\geq\frac{E[|FP|]}{\beta S}
		> 1 -\frac{1}{\beta}\exp(-2\big(\frac{\beta}{2}\big)^2\Gamma^2|Q||X|)\geq \frac{1}{2},
	\end{align*}
	where the last inequality holds by the assumption on $\Gamma$. This finishes the proof.
\end{proof}
We are now ready to prove the theorem.
\begin{proof}[of Theorem \ref{mainTh}]
	By Lemma \ref{theLemma}  properties $\mathcal{P}1$ and $\mathcal{P}2$ hold with constant high probability. Therefore, we may  assume these properties hold  simultaneously.
	
	Let $r$ be the smallest $\Gamma$-distance between $Q$ and an object of $\mathcal{D}$. Set   $t = \lceil \log_c r\rceil$ and $R = c^t$.
	
	Assume first that the algorithm finishes with  terminating  condition $\mathcal{T}1$, that is at level $R$ at least $1+\beta S$ $\Gamma$-candidates have been found. By property $\mathcal{P}2$ at most $\beta S$ of these are false positives. Let $X$ be the object returned by the algorithm, then we have $\Gamma dist(Q,X)\leq cR\leq c^2r$.
	
	Now, if the algorithm does not finish with $\mathcal{T}1$, then property $\mathcal{P}1$ guarantees it finishes with $\mathcal{T}2$ at the end of level $R$.  Let $X$ be the object returned by the algorithm, then we have $\Gamma dist(Q,X)\leq R\leq cr<c^2r$. This finishes the proof.
\end{proof}

\section{Experimental Evaluation}
\label{sec:exp}
In this section, we evaluate the effectiveness of our proposed index structure, \textit{mmLSH} on four real multimedia data sets, under different system parameters. All experiments were run on the nodes of the Bigdat cluster \footnote{Supported by NSF Award \#1337884} with the following specifications: two Intel Xeon E5-2695, 256GB RAM, and CentOS 6.5 operating system. We used the state-of-the-art C2LSH \cite{Gan:2012:LHS:2213836.2213898} as our base implementation.\footnote{\textit{mmLSH} can be implemented over any state-of-the-art LSH technique.} All codes were written in C++11 and compiled with gcc v4.7.2 with the -O3 optimization flag. For existing state-of-the-art algorithms (C2LSH and QALSH), we used the Borda Count process (Section \ref{sec:mmLSH}) to aggregate the results of the point queries to find the nearest neighbor objects. Additionally, since the accuracy and the performance of the aggregation is affected by the chosen number of top-$k'$ results of the point queries, we choose a varying $k'$ for Linear, C2LSH, and QALSH for a fair comparison: $k' = 25, 50, 100$. We also implement an LRU buffer for the indexes in C2LSH and QALSH to show a fair comparison with our results. We compare our work with the following alternatives: 
\begin{itemize}[leftmargin=*]
	\item \textbf{LinearSearch-Borda:} In this alternative, the top-$k'$ results of the point queries are found using a brute-force linear search. This method does not utilize the buffer since it does not have any indexes.  
	\item \textbf{C2LSH-Borda:} top-$k'$ results of point queries are found using C2LSH \cite{Gan:2012:LHS:2213836.2213898}.
	\item \textbf{QALSH-Borda:} top-$k'$ results of point queries are found using QALSH \cite{Huang:2015:QLH:2850469.2850470}.	
\end{itemize}

%\vspace*{-0.15in}
\subsection{Datasets}
We use the following four real multimedia datasets to evaluate \textit{mmLSH}. Different feature extraction algorithms are used to show the effectiveness of \textit{mmLSH}.
\begin{itemize}[leftmargin=*]
	\item 
	\textbf{Caltech}\cite{Caltech256Image} This dataset consists of 3,767,761 32-dimensional points that were created using BRIEF on 28,049 images belonging to 256 categories.
	\item 
	\textbf{Corel}\cite{Corel10kImage} This dataset consists of 1,710,725 64-dimensional points that were created using SURF on 9,994 images belonging to 100 categories.
	\item
	\textbf{MirFlicker}\cite{MirFlicker25k} This dataset consists of 12,004,143 32-dimensional points that were created using ORB on 24,980 images.
	\item
	\textbf{Wang}\cite{wangdataset} This dataset consists of 695,672 128-dimensional SIFT descriptors belonging to 1000 images. These images belong to 10 different categories. 
\end{itemize}

\subsection{Evaluation Criteria and Parameters}
\label{sec:evalCriteria}

We evaluate the execution time and accuracy using the following criteria:
\begin{itemize}[leftmargin=*]
	\item \textbf{Time: }The two main dominant costs in LSH-based techniques are the algorithm time and the index IO time. We observed that the index IO times were not consistent (i.e. running the same query multiple times, which needed the same index IOs, would return drastically different results, mainly because of disk cache and instruction cache issues). Thus, the overall execution time is modeled for an HDD where an average disk seek requires 8.5 ms and an average data read rate is 0.156 MB/ms \cite{SeagateHDD}. 
	\item \textbf{Accuracy: }Similar to the ratio defined in earlier works \cite{Gan:2012:LHS:2213836.2213898,Huang:2015:QLH:2850469.2850470}, we define an object ratio to calculate the accuracy of the returned top-$k$ objects as following:
	$OR_{\Gamma}(Q) = \frac{1}{k} \sum_{i=1}^{k}\frac{\Gamma dist(Q, X_i)}{\Gamma dist(Q, X_i^*)}$
	where $X_1, ..., X_k$ denote the top-$k$ objects returned from the algorithm and $X_1^*, ..., X_k^*$ denote the real objects found from the ground truth. $\Gamma dist$ is computed using Equation \ref{eqn:gamma_dist}. Object Ratio of 1 means $100\%$ accuracy and as it increases, the accuracy decreases.
\end{itemize}

\noindent We do not report the index size or the index construction cost, since they would be the same as the underlying LSH implementation that we use (C2LSH \cite{Gan:2012:LHS:2213836.2213898}). 
\noindent 
We choose $\delta=0.1$, $\beta = \frac{25}{S}$, $\varepsilon=0.2$, $w=2.184$ \cite{Huang:2015:QLH:2850469.2850470} for C2LSH and mmLSH, $w=2.7191$ \cite{Huang:2015:QLH:2850469.2850470} for QALSH. 
We randomly chose 10 multimedia objects as queries from each dataset and report the average of the results. 

\begin{figure*}[h]
	\centering
	\begin{subfigure}[b]{0.4\textwidth}
		\centering
		{\includegraphics[width=\linewidth]{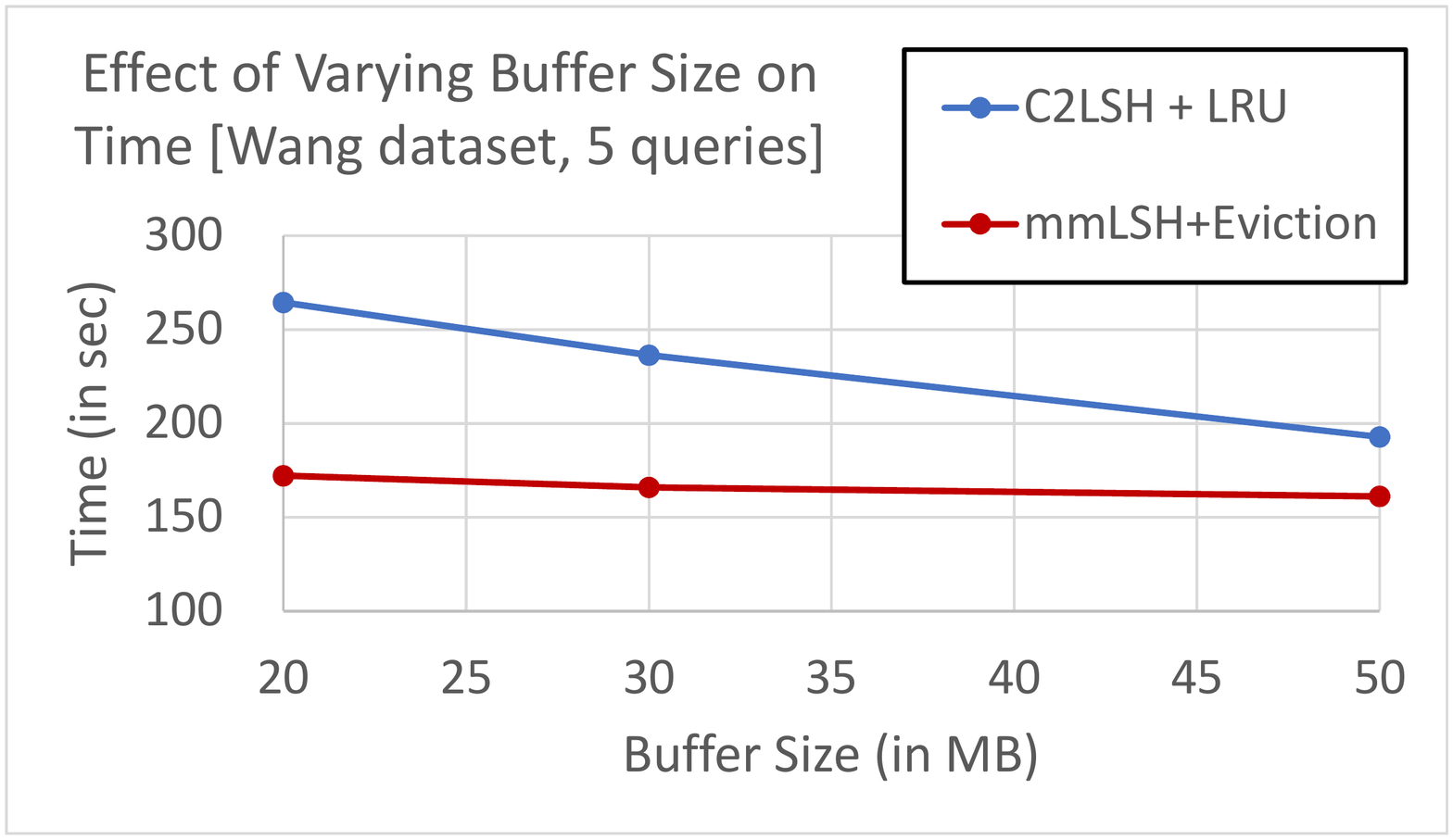}}
	\end{subfigure}\quad
	\begin{subfigure}[b]{0.4\textwidth}
		\centering
		{\includegraphics[width=\linewidth]{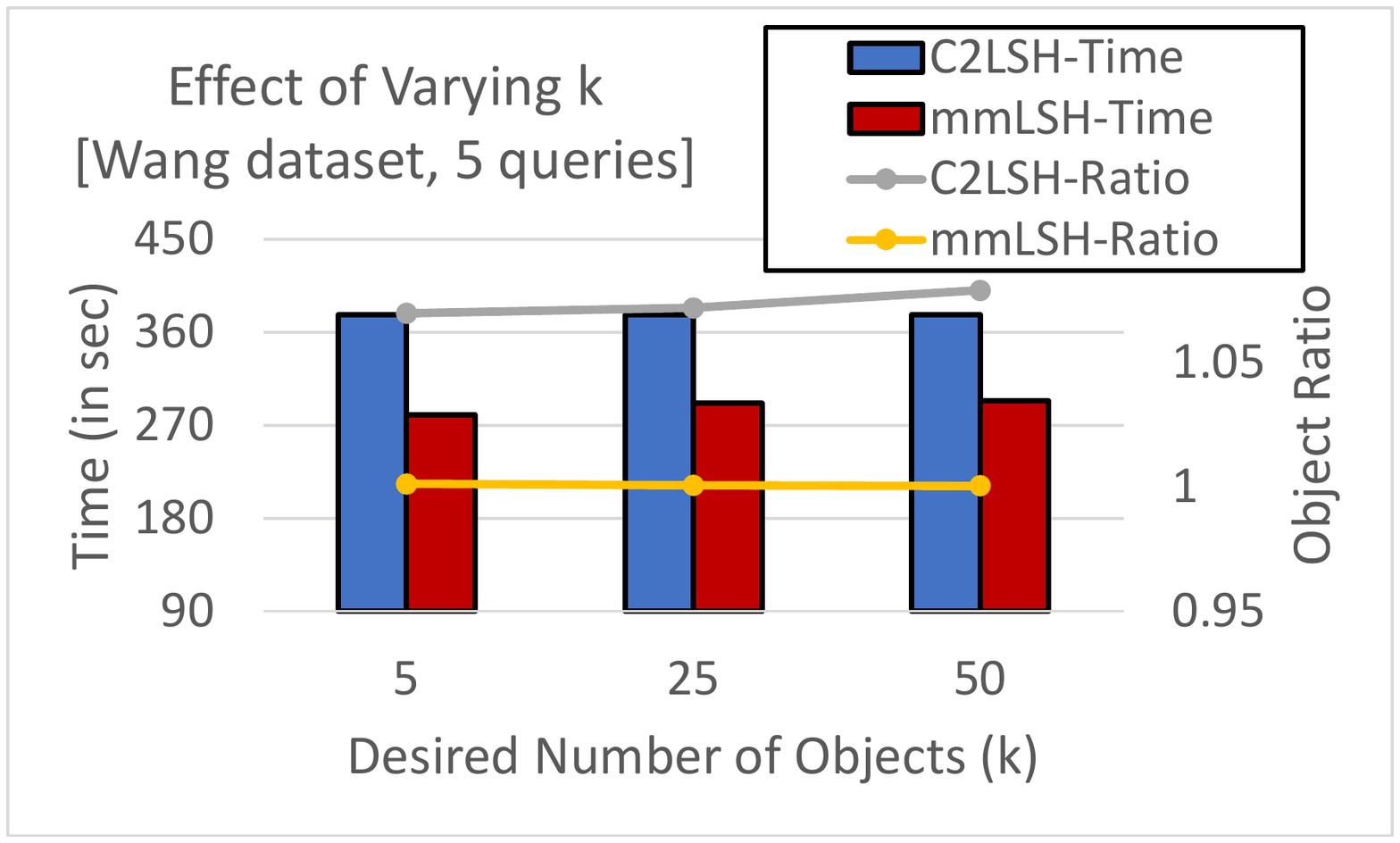}}
	\end{subfigure}\quad
	%\vspace*{-0.15in}
	\caption{Effect of (a) Buffer Size on Time, (b) Varying $k$ on Time and Accuracy}
	\label{fig:effectParams}
	%\vspace*{-0.4in}
\end{figure*}

\begin{figure*}[!h]
	\centering
	{\includegraphics[width=\linewidth]{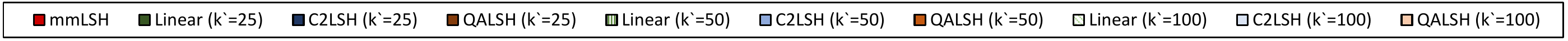}}
	
	\begin{subfigure}[b]{0.47\textwidth}
		\centering
		{\includegraphics[width=\linewidth, height=0.65in]{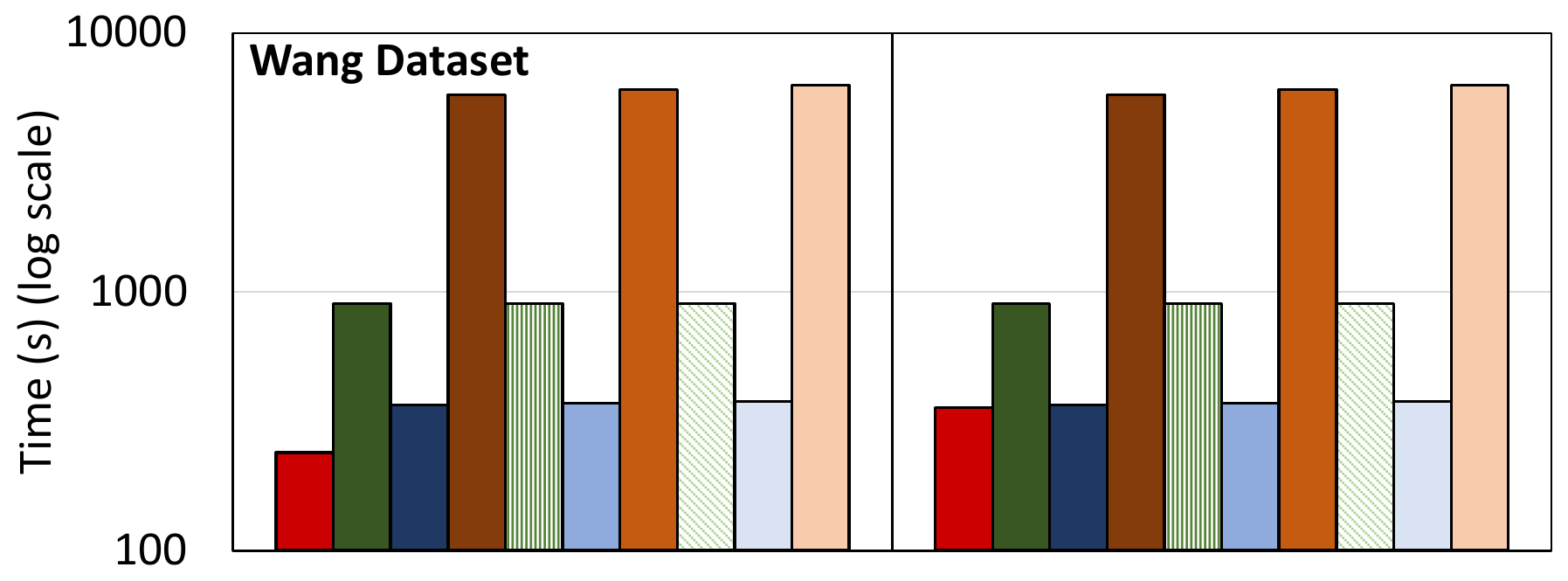}}
	\end{subfigure}\quad
	\begin{subfigure}[b]{0.47\textwidth}
		\centering
		{\includegraphics[width=\linewidth, height=0.65in]{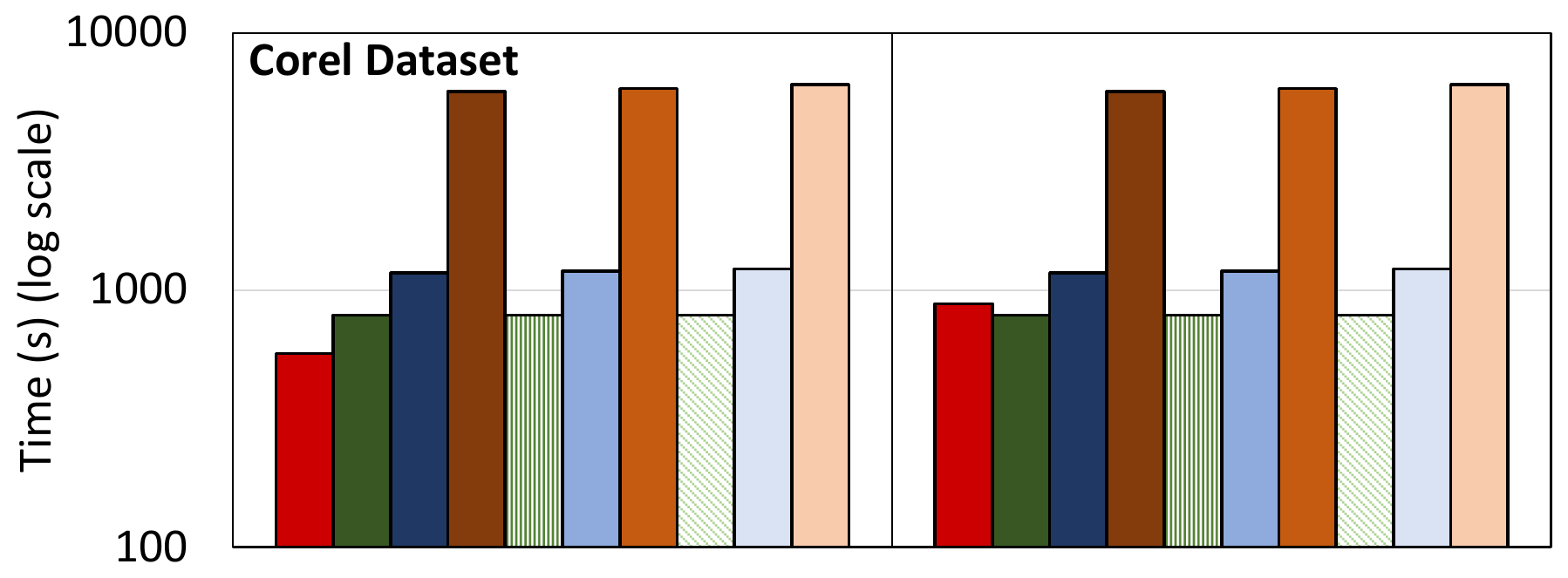}}
	\end{subfigure} \\

	\begin{subfigure}[b]{0.47\textwidth}
		\centering
		{\includegraphics[width=\linewidth, height=0.75in]{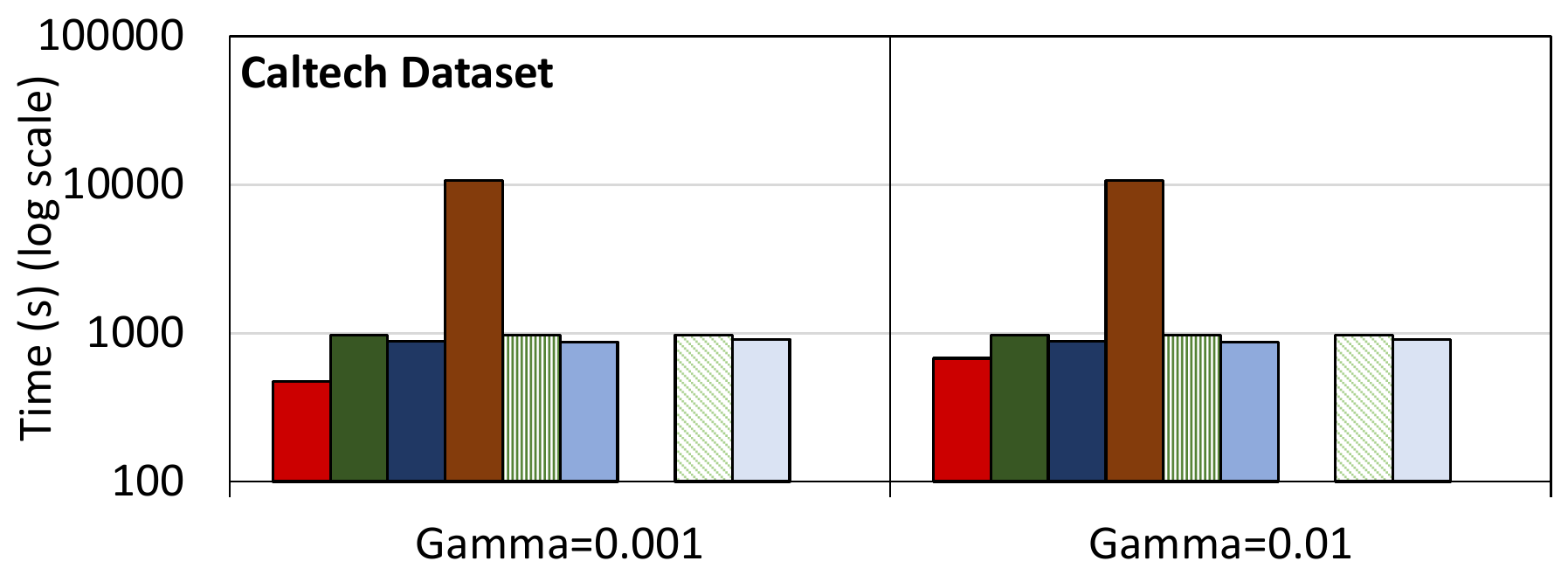}}
	\end{subfigure}\quad
	\begin{subfigure}[b]{0.47\textwidth}
		\centering
		{\includegraphics[width=\linewidth, height=0.75in]{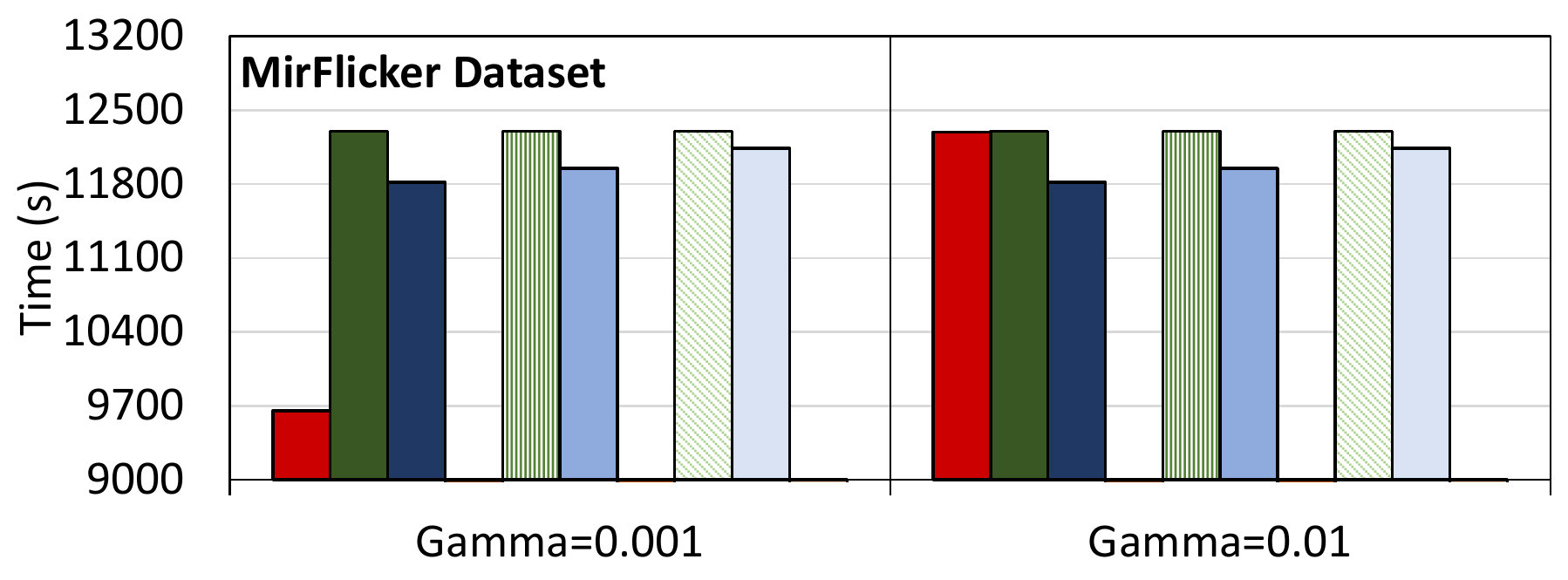}}
	\end{subfigure}

	\caption{Comparison of Time of \textit{mmLSH} against alternatives}
	\label{fig:expTime}

	\bigskip

	\begin{subfigure}[b]{0.47\textwidth}
		\centering
		{\includegraphics[width=\linewidth, height=0.65in]{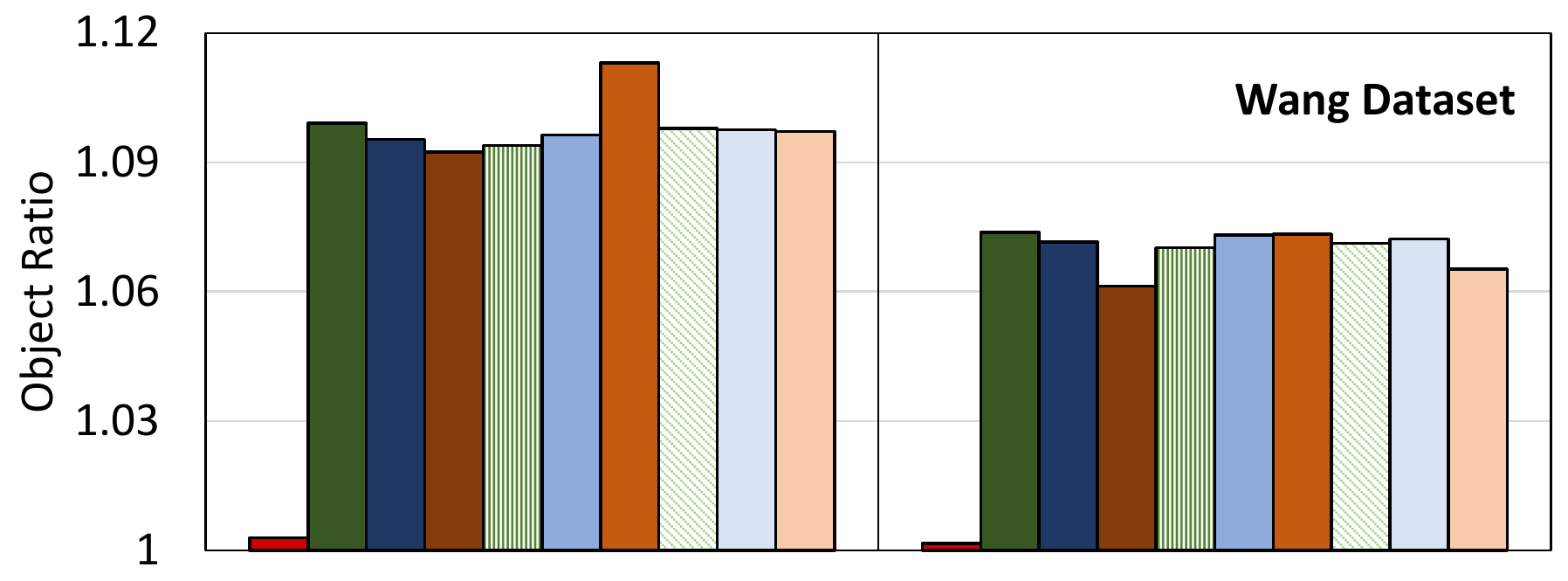}}
	\end{subfigure}\quad
	\begin{subfigure}[b]{0.47\textwidth}
		\centering
		{\includegraphics[width=\linewidth, height=0.65in]{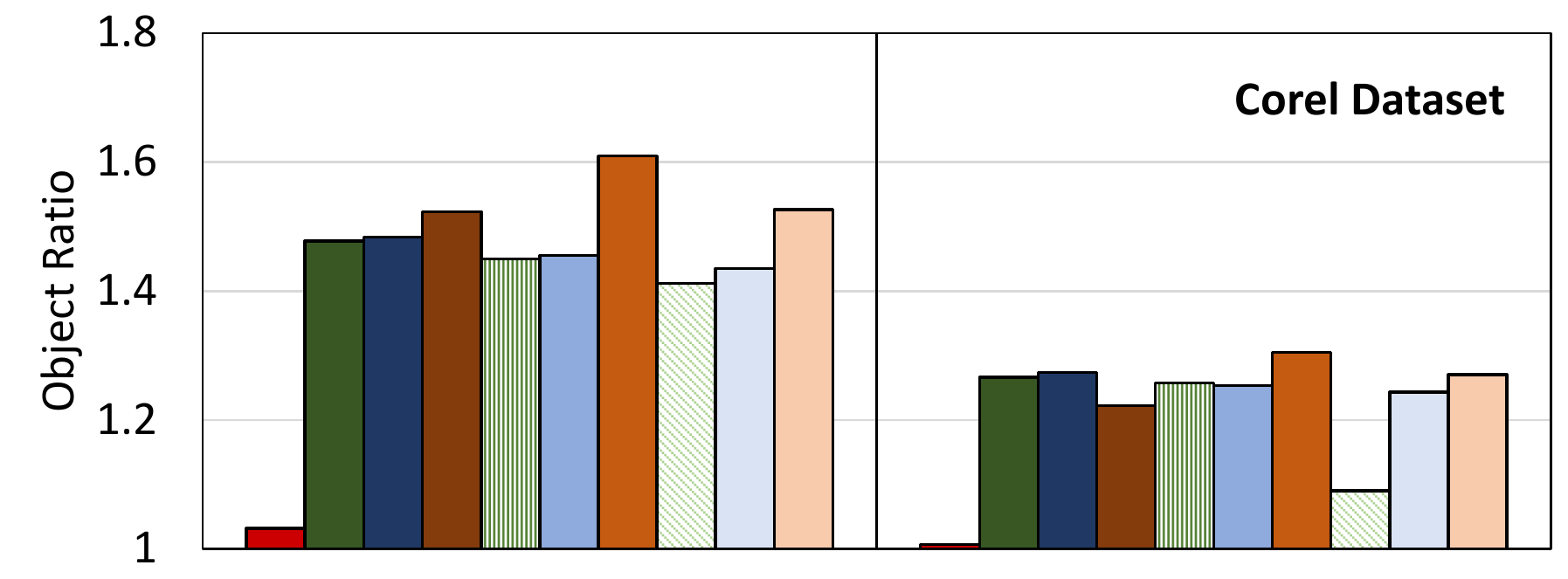}}
	\end{subfigure} \\
	
	\begin{subfigure}[b]{0.47\textwidth}
		\centering
		{\includegraphics[width=\linewidth, height=0.75in]{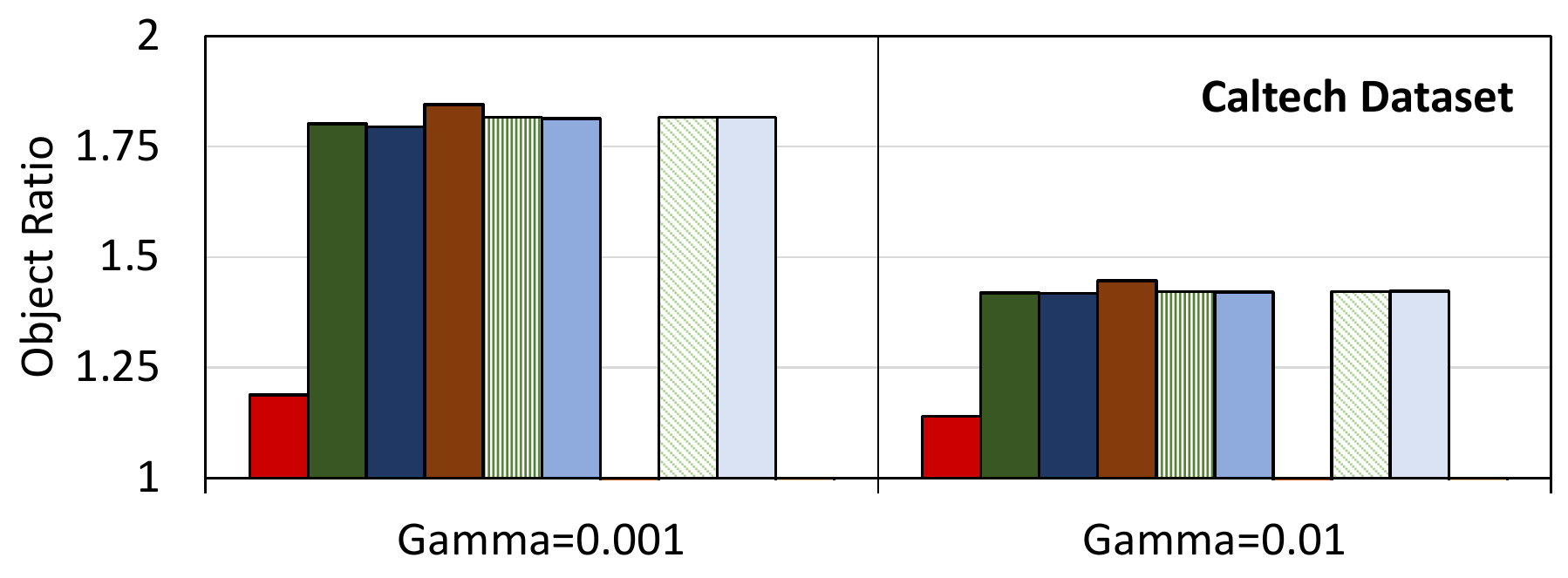}}
	\end{subfigure}\quad
	\begin{subfigure}[b]{0.47\textwidth}
		\centering
		{\includegraphics[width=\linewidth, height=0.75in]{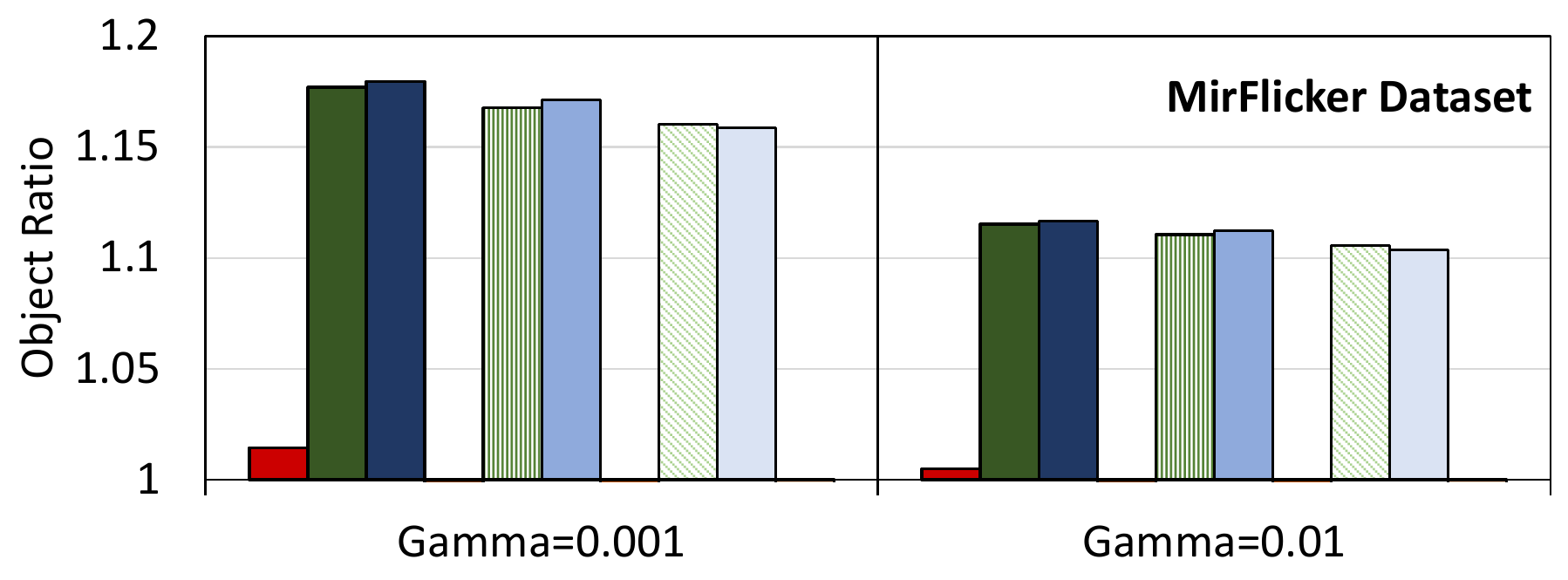}}
	\end{subfigure}
	
	\caption{Comparison of Accuracy of \textit{mmLSH} against alternatives}
	\label{fig:expRatio}
\end{figure*}

\subsection{Discussion of the Results}
\label{sec:results}

In this section, we analyze the execution time and accuracy of \textit{mmLSH} using the criteria explained in Section \ref{sec:evalCriteria} against its alternatives. We note that QALSH gives drastically worse times than C2LSH, and hence when comparing the effectiveness for varying parameters, we only compare with C2LSH. \\

\noindent\textbf{Effect of Buffer Size: }
Figure \ref{fig:effectParams} (a) shows the benefit of our eviction strategy (Section \ref{sec:bufferStrategy}) when compared with C2LSH + LRU for varying buffer sizes. It is evident from this figure that our three criterion are helpful in evicting less useful index files (and keeping the more useful files in the buffer). The small overhead in Algorithm time is offset by a significant reduction in the number of index IOs, which eventually results in lower overall time. We found out that the highest L3 cache size is 24.75 MB for desktop processors and 60MB for server processors. Therefore, we decided to choose the buffer sizes between 20 MB and 50 MB. We use 30 MB as the default cache size in the following experiments. \\

\noindent\textbf{Effect of Number of Desired Objects: }
Figure \ref{fig:effectParams} (b) shows the execution time and accuracy of \textit{mmLSH} against C2LSH for varying number of desired objects ($k$). This figure shows that \textit{mmLSH} has better time and object ratio for different $k$ values. Additionally, it shows that \textit{mmLSH} is scalable for a large number of desired objects as well. Moreover, although the object ratio of \textit{mmLSH} stays the same by increasing k, the object ratio of C2LSH increases. We use $k=25$ as the default for the following experiments. \\

\noindent\textbf{Comparison of mmLSH vs. State-of-the-art Methods}
Figures \ref{fig:expTime} and \ref{fig:expRatio} show the time and accuracy of \textit{mmLSH}, LinearSearch-Borda, C2LSH-Borda, QALSH-Borda for 4 multimedia datasets with varying characteristics. The Borda count process is done after query processing and takes very negligible time. Note that, In our work, we consider all feature-vectors that are extracted by a feature-extraction algorithm. Several works \cite{5523894,10.1007/978-3-642-13772-3_40} have been proposed that cluster these points with the purpose of finding a representative point to reduce the complexity and overall processing time of the problem. Our work is orthogonal to those approaches and hence are not included in this paper. 

We found that QALSH took a very long time as compared with \textit{mmLSH} and other alternatives. For the Caltech and MirFlickr datasets, QALSH did not finish the experiments due to their slow execution and hence are not included in the charts. The slow execution is mainly due to the use of the B+-tree index structures to find the nearest neighbors in the hash functions. \textit{mmLSH} always returns a higher accuracy than the alternatives while being much faster than all three alternatives. This is because \textit{mmLSH} is able to leverage the common elements between queries and improve cache utilization along with being able to stop earlier than the state-of-the-art algorithms. For future work, we plan on investigating the application of \textit{mmLSH} to other distance measures (such as Hausdorff distance, etc.) and compare with (and possibly utilize) other feature vector aggregation techniques \cite{Jegou:2010:IBL:1718320.1718326}.

\section{Conclusion}
\label{sec:concl}

In this paper, we presented a novel index structure for efficiently finding top-k approximate nearest neighbors for \underline{m}ulti\underline{m}edia data using \underline{LSH}, called \textit{mmLSH}. Existing LSH-based techniques can give theoretical guarantees on these individual high-dimensional feature vector queries, but not on the multimedia object query. These techniques also treat each individual feature vector belonging to the object as independent of one another. 
In \textit{mmLSH}, novel strategies are used that improve execution time and accuracy of a multimedia object query. Additionally, we provide rigorous theoretical analysis and guarantees on our returned results. Experimental evaluation shows the benefit of \textit{mmLSH} in terms of execution time and accuracy compared to state-of-the-art algorithms. Additionally, \textit{mmLSH} can give theoretical guarantees on the final results instead of the individual point queries.

\begingroup

\endgroup

% \bibliographystyle{splncs04}
% \bibliography{References}
\end{document}